\documentclass[letterpaper, 10 pt, conference]{ieeeconf}
\IEEEoverridecommandlockouts
\overrideIEEEmargins

\usepackage{xcolor}
\usepackage{amsmath,amssymb,amsthm}
\usepackage{enumitem}
\usepackage{url}
\usepackage{cleveref,cite}
\usepackage{algorithmicx,algorithm,algpseudocode}
\usepackage{graphicx,epstopdf}
\usepackage{caption,subcaption}

\newcommand{\bs}[1]{\boldsymbol{#1}}
\newcommand{\cl}[1]{\mathcal{#1}}
\newcommand{\bb}[1]{\mathbb{#1}}

\newcommand{\lb}{\left(}
\newcommand{\rb}{\right)}
\newcommand{\ls}{\left[}
\newcommand{\rs}{\right]}
\newcommand{\lc}{\left\{}
\newcommand{\rc}{\right\}}
\newcommand{\ld}{\left.}
\newcommand{\rd}{\right.}
\newcommand{\lv}{\left\vert}
\newcommand{\rv}{\right\vert}

\newcommand{\rank}{\operatorname{rank}}

\newtheorem{theorem}{Theorem}

\newtheorem{prop}{Proposition}
\newtheorem{defn}{Definition}

\newcommand{\ER}{Erd\H{o}s R\'enyi}
\title{\LARGE \bf Minimal Input Structural Modifications for Strongly Structural Controllability}

\author{Geethu Joseph$^{1}$, Shana Moothedath$^{2}$, and Jiabin Lin$^2$
\thanks{$^{1}$Geethu Joseph is with the Faculty of Electrical Engineering, Mathematics and Computer Science,
        Delft University of Technology, 2628 CD Delft, The Netherlands
        {\tt\small G.Joseph@tudelft.nl}}%
\thanks{$^{2}$Shana Moothedath and Jiabin Lin are with the Department of Electrical and Computer Engineering, Iowa State University,
        IA 50011, USA
        {\tt\small \{mshana,jiabin\}@iastate.edu}}%
}

\begin{document}
\maketitle
\thispagestyle{empty}
\pagestyle{empty}

\begin{abstract}
This paper studies the problem of modifying the input matrix of a structured system to make the system strongly structurally controllable.  We focus on the generalized structured systems that rely on zero/nonzero/arbitrary structure, i.e., some entries of system matrices are zeros, some are nonzero, and the remaining entries can be zero or nonzero (arbitrary). {We analyze the feasibility of the problem and if it is feasible, we reformulate it into another equivalent problem. This new formulation leads to a greedy heuristic algorithm}. However, we also show that the greedy algorithm can give arbitrarily poor solutions for some special systems. Our alternative approach is a randomized Markov chain Monte Carlo-based algorithm. Unlike the greedy algorithm, this algorithm is guaranteed to converge to an optimal solution with high probability. Finally, we numerically evaluate the algorithms on random graphs to show that the algorithms perform well.
\end{abstract}
\begin{keywords}
   Network controllability, pattern matrices, structured system, zero forcing, Markov chain Monte Carlo
\end{keywords}

\section{Introduction}
Network controllability is a fundamental property used to analyze the behavior of dynamical systems in complex networks like social networks~\cite{joseph2021controllability}, power systems~\cite{pequito2013framework}, and biological systems~\cite{gu2015controllability}. However, the underlying dynamical model of such systems may not be completely known. For example, in a large social network where people interact and influence each other, this influence may not be measurable~\cite{joseph2021controllability}. We may only know whether the connections exist or not, and sometimes, the interconnection structure can be unknown at several locations. Such uncertainties {in large systems like social, power, and brain networks~\cite{gu2015controllability,pequito2013framework}} can be modeled using \emph{structured systems}~\cite{mayeda1979strong}. It comprises the linear dynamical systems whose system matrices follow a \emph{zero/nonzero/arbitrary structure} (arbitrary refers to values that can be zero or nonzero). A structured system is strongly structurally controllable if all the linear dynamical systems belonging to the system are controllable. Given a strongly structurally uncontrollable system, we aim to enforce strong structural controllability by 
minimally modifying the interconnection structure between the inputs and state variables. 

Several studies have investigated the strong structural controllability of structured systems with various zero/nonzero patterns~\cite{liu2017structural,mousavi2017structural,
menara2018structural}. The related prior work addresses the problems such as graph-theoretic controllability tests~\cite{mayeda1979strong,reinschke1992strong,jarczyk2011strong}, minimum input selection~\cite{chapman2013strong,yashashwi2019minimizing,abbas2023zero,trefois2015zero}, and targeted controllability~\cite{li2020structural,monshizadeh2015strong}. However, the above works are based on structured systems defined by zero/nonzero patterns. 
Recently, this model has been generalized to accommodate a third possibility that a system matrix entry is not a fixed zero or nonzero but can take any real value~\cite{jia2020unifying}. This work presented a controllability test for the generalized systems. 

Furthermore, to the best of our knowledge, the structural modification problem has not been studied in the context of strong structural controllability for either zero/nonzero or zero/nonzero/arbitrary systems. 
{The current studies for zero/nonzero systems focus on designing a minimal input matrix for a specified state matrix, not on structural modification of a given input matrix~\cite{chapman2013strong,yashashwi2019minimizing,abbas2023zero,trefois2015zero}.}
Structural modification has been investigated in the case of (not strong) structural controllability of systems defined by zero/nonzero pattern matrices~\cite{zhang2019minimal,chen2018minimal,wang2023input}. This is a relaxed version of strong structural controllability, and naturally, this formulation is not directly extendable to our strong structural controllability setting. Therefore, motivated by the recent advances in strong structural controllability~\cite{jia2020unifying}, we look at the problem of making minimal changes to the input matrix of a structured system to guarantee its strong structural controllability. 
Our contributions are as follows:
\begin{itemize}[leftmargin=0.3cm]
    \item  In \Cref{sec:probformulation}, we formulate the minimal (input) structural modification problem to ensure strong structural controllability. We discuss the conditions for the feasibility of the problem in \Cref{thm:feasibility}, and under the feasibility conditions, we present a more intuitive reformulation via \Cref{thm:eq_opt}. 
    \item In \Cref{sec:greedy}, we present and analyze a simple greedy algorithm for the structural modification problem. \Cref{prop:greedyopt} shows that although a greedy solution typically performs well in general, in certain cases, the cost function of the greedy solution can be arbitrarily larger than the optimal solution.  
    \item In \Cref{sec:mcmc}, we devise a Markov chain Monte Carlo (MCMC)-based solution, with guarantees on optimality (\Cref{thm:steadystate}) and error probability (\Cref{prop:prob_error}) if it is run sufficiently long.
\end{itemize}
Overall, our results provide interesting insights and devise design algorithms to achieve strong structural controllability. 

\section{Notation and Preliminaries}\label{sec:prelim}
A pattern matrix $\cl{M}\in\{0,*,?\}^{p\times q}$ defines a pattern class, $\bb{P}(\cl{M})\subseteq\bb{R}^{p\times q}$ as follows.
\begin{multline}\label{eq:patternclass}
    \bb{P}(\cl{M}) = \lc\bs{M}\in\bb{R}^{p\times q}: 
    \bs{M}_{ij}=0\; \text{if }\; \cl{M}_{ij}=0 \rd\\\ld \; \text{and} \;
    \bs{M}_{ij}\neq  0\;\text{if }\; \cl{M}_{ij}=*\rc.
\end{multline}
We note that the entries in $\bs{M}\in\bb{P}(\cl{M})$ corresponding to $?$ entries in $\cl{M}$ can either be zero or nonzero (arbitrary).

    A structured system $(\cl{A},\cl{B})$ is a class of linear dynamical systems $(\bs{A},\bs{B})$ whose state matrix $\bs{A}\in\bb{P}(\cl{A})$ and input matrix $\bs{B}\in\bb{P}(\cl{B})$, where the set $\bb{P}(\cdot)$ is defined in \eqref{eq:patternclass}. The system $(\cl{A},\cl{B})$ is called strongly structurally controllable if the linear dynamical system $(\bs{A}, \bs{B})$ is controllable for any $\bs{A}\in\bb{P}(\cl{A})$ and $\bs{B}\in\bb{P}(\cl{B})$.
 It can be tested using the notion of the rank of pattern matrices, 
$\rank(\cl{M}) = \min_{\bs{M}\in\bb{P}(\cl{M})}\rank(\bs{M})$.
     In particular, the pattern matrix $\cl{M}$ is full row rank if every matrix $\bs{M}\in\bb{P}(\cl{M})$ has full row rank. A rank-based test for strong structural controllability of a structured system is as follows.
 \begin{theorem}[\cite{jia2020unifying}]\label{thm:strongcontrol}
     The system $(\cl{A},\cl{B})$ is strong structural controllable if and only if the following two conditions hold:
 \begin{enumerate}
     \item The pattern matrix $\begin{bmatrix}
         \cl{A} & \cl{B}
     \end{bmatrix} $ has full row rank.
\item The pattern matrix $\begin{bmatrix}
         \cl{Q}(\cl{A}) & \cl{B}
     \end{bmatrix}$ has full row rank, where  pattern matrix $\cl{Q}(\cl{A})$ is 
\begin{equation}\label{eq:mod_A}
    [\cl{Q}(\cl{A})]_{ij} = \begin{cases}
    \cl{A}_{ij} &\text{if }\;  i\neq j\\
        * &\text{if }\; i=j \;\text{and}\; \cl{A}_{ii}=0\\
        ? &\text{otherwise}.
    \end{cases}
\end{equation}
 \end{enumerate}
 \end{theorem}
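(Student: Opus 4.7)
The plan is to derive both conditions from the Popov--Belevitch--Hautus (PBH) rank test applied uniformly across all realizations. Recall that a single pair $(\bs{A},\bs{B})$ is controllable iff $[\bs{A}-\lambda\bs{I}_n\;\;\bs{B}]$ has full row rank for every $\lambda\in\bb{C}$. Hence $(\cl{A},\cl{B})$ is strongly structurally controllable iff this rank condition holds simultaneously for every realization $(\bs{A},\bs{B})\in\bb{P}(\cl{A})\times\bb{P}(\cl{B})$ and every $\lambda\in\bb{C}$. I would then split the universal quantifier over $\lambda$ into the sub-cases $\lambda=0$ and $\lambda\ne 0$, each producing exactly one of the pattern-matrix conditions.

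The case $\lambda=0$ is immediate: ``$[\bs{A}\;\bs{B}]$ has full row rank for all realizations'' is, by the definition of pattern-matrix rank, the statement that $[\cl{A}\;\cl{B}]$ is full row rank, i.e., condition~(1). For $\lambda\ne 0$, the core step is to show that the family $\{\bs{A}-\lambda\bs{I}_n:\bs{A}\in\bb{P}(\cl{A}),\,\lambda\ne 0\}$ coincides as a pattern class with $\bb{P}(\cl{Q}(\cl{A}))$, by checking each entry against \eqref{eq:mod_A}. Off-diagonal entries inherit the pattern of $\cl{A}$. For a diagonal entry at index $i$: if $\cl{A}_{ii}=0$, the shifted entry equals $-\lambda\ne 0$, matching the $*$ in $\cl{Q}(\cl{A})$; if $\cl{A}_{ii}\in\{*,?\}$, the shifted entry is genuinely arbitrary (take $\lambda=\bs{A}_{ii}$ to make it zero, otherwise nonzero), matching the $?$ in $\cl{Q}(\cl{A})$. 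This identification gives the sufficiency direction at once: every realization shift lies in $\bb{P}(\cl{Q}(\cl{A}))$, so condition~(2) forces full row rank for all $\lambda\ne 0$, and condition~(1) covers $\lambda=0$.

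For the necessity of condition~(2), given a rank-deficient witness $\bs{M}\in\bb{P}(\cl{Q}(\cl{A}))$ paired with $\bs{B}\in\bb{P}(\cl{B})$, I would pick a nonzero $\lambda$ and set $\bs{A}:=\bs{M}+\lambda\bs{I}_n$, so that a left null vector $\bs{v}$ of $[\bs{M}\;\bs{B}]$ automatically becomes a left eigenvector of $\bs{A}$ at eigenvalue $\lambda$ with $\bs{v}^\T\bs{B}=0$, thereby violating PBH for $(\bs{A},\bs{B})$ at $\lambda$ and producing an uncontrollable realization.

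The main obstacle is ensuring that $\bs{A}\in\bb{P}(\cl{A})$: at every diagonal index with $\cl{A}_{ii}=0$, the construction requires $\bs{M}_{ii}=-\lambda$, forcing all such entries of $\bs{M}$ to agree, which a generic witness need not satisfy. I would resolve this by exploiting the freedom to replace $\bs{M}_{ii}$ by any nonzero value at indices $i$ with $v_i=0$ (since such changes do not disturb $\bs{v}^\T\bs{M}=0$), and then using a direct entrywise construction or a Zariski-density argument on the witness set to normalize the remaining indices to a common value $-\lambda$. A secondary technicality is that PBH ranges over $\lambda\in\bb{C}$ while pattern classes are defined over $\bb{R}$; a standard conjugate-pair argument shows that if a complex $\lambda$ witnesses rank deficiency then so does a real one, after passing to real/imaginary parts of $\bs{v}$, so the entrywise identification with $\bb{P}(\cl{Q}(\cl{A}))$ goes through.
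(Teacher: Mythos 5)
A preliminary remark: the paper does not prove this theorem at all---it is imported verbatim from \cite{jia2020unifying}---so there is no in-paper proof to compare against, and your argument has to be judged on its own terms. Your PBH skeleton (split over $\lambda=0$ versus $\lambda\neq 0$, and relate the shifts $\bs{A}-\lambda\bs{I}$ to realizations of $\cl{Q}(\cl{A})$) is the right frame and the $\lambda=0$ case is fine, but two steps do not go through as written. The first is the necessity of condition~(2). Given a rank-deficient witness $[\bs{M}\;\;\bs{B}]$ with left null vector $\bs{v}$, the construction $\bs{A}:=\bs{M}+\lambda\bs{I}$ requires $\bs{M}_{ii}=-\lambda$ simultaneously at every index $i$ with $\cl{A}_{ii}=0$; you acknowledge this, but neither proposed repair closes it. Adjusting $\bs{M}_{ii}$ is only free at indices with $v_i=0$; at indices with $v_i\neq 0$ no ``direct entrywise construction'' is exhibited, and a Zariski-density argument has nothing to bite on, since rank deficiency is a \emph{closed} condition and there is no a priori reason the witness variety contains a point with all the relevant diagonal entries equal. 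The missing idea is a diagonal \emph{column} scaling: replace $\bs{M}$ by $\bs{M}\bs{D}$ with $\bs{D}$ diagonal and nonsingular, choosing $d_i=-\lambda/\bs{M}_{ii}$ wherever $[\cl{Q}(\cl{A})]_{ii}=*$ (legitimate since $\bs{M}_{ii}\neq 0$ there) and $d_i$ generic otherwise. Column scaling preserves membership in $\bb{P}(\cl{Q}(\cl{A}))$ and preserves the left null vector, so $\bs{A}:=\bs{M}\bs{D}+\lambda\bs{I}$ lies in $\bb{P}(\cl{A})$ and satisfies $\bs{v}^{\T}[\bs{A}-\lambda\bs{I}\;\;\bs{B}]=\bs{0}$, which is the uncontrollable realization you want. (Row scaling would not work: it destroys $\bs{v}^{\T}\bs{B}=0$.)

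The second gap is your treatment of non-real $\lambda$. The claim that ``if a complex $\lambda$ witnesses rank deficiency then so does a real one'' is false for a fixed realization: with $\cl{A}=\left[\begin{smallmatrix}0&*\\ *&0\end{smallmatrix}\right]$, $\bs{A}=\left[\begin{smallmatrix}0&-1\\ 1&0\end{smallmatrix}\right]$ and $\cl{B}=\bs{0}$, the PBH matrix is rank deficient only at $\lambda=\pm i$, and no passage to real and imaginary parts of $\bs{v}$ yields a real witness for that same pair. As stated, your sufficiency argument therefore only rules out uncontrollability at real eigenvalues. The standard repair is different: one shows that full row rank of a pattern matrix over all \emph{real} realizations is equivalent to full row rank over all \emph{complex} realizations (the zero-forcing characterization in \Cref{thm:colorchange} is field-independent), after which condition~(2) applies directly to the complex matrix $\bs{A}-\lambda\bs{I}$, viewed as a realization of $\cl{Q}(\cl{A})$ over $\bb{C}$, for any non-real $\lambda$. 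Finally, a small but symptomatic misstatement: the family of shifts does not ``coincide'' with $\bb{P}(\cl{Q}(\cl{A}))$ as a pattern class---it is a proper subset whenever $\cl{A}$ has two zero diagonal entries, which is precisely why the necessity direction needs the scaling trick above rather than following ``at once'' from the identification.
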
 
Here, a graph-theoretic algorithm, summarized in \Cref{alg:colorchange}~\cite{jia2020unifying} can be used to test the pattern matrix rank.  
\begin{algorithm}[hptb]
\caption{Color change rule}
\label{alg:colorchange}
\begin{algorithmic}[1]
\Statex \textit {\bf Input:} Pattern matrix $\cl{M}\in\{0,*,?\}^{p\times q}$
\State Set $\bb{E}_*=\{(j,i): \cl{M}_{ij}=*\}$; $\bb{E}_?=\{(j,i): \cl{M}_{ij}=?\}$
\State \label{step:initalize}Initialize the set of white vertices $W\gets\{1,2,\ldots,p\}$
\Repeat
\State Set $W_{\rm del} \gets \lc i\in W: \exists j \;\text{such that}\; (j, i) \in E_* \rd$
\Statex \hfill $\ld \text{and}\; (j,\tilde{i})\notin\bb{E}_*\cup \bb{E}_?, \forall \tilde{i}\in W\setminus\{i\}\rc$
\State \label{step:update_W}Color vertex set $W_{\rm del}$ black, $W\gets W\setminus W_{\rm del}$ 
\Until {$W_{\rm del}=\emptyset$}
\State Set $W(\cl{M})\gets W$
\Statex \textit{\bf Output:} Set of white vertices $W(\cl{M})$
\end{algorithmic}
\end{algorithm}

\begin{theorem}[\cite{jia2020unifying}]\label{thm:colorchange}
A given pattern matrix $\cl{M}\in\{0,{*,?}\}^{p\times q}$ is full row rank if and only if the set of white vertices $W(\cl{M})$ outputted by the color change rule in \Cref{alg:colorchange} is empty.     
\end{theorem}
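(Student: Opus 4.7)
The plan is to prove the two directions of the equivalence separately, with the forward implication (emptiness $\Rightarrow$ full row rank) being essentially a correctness proof of the coloring algorithm, and the reverse implication requiring an explicit construction of a rank-deficient matrix.

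For the forward direction, I would proceed by contradiction: assume some $\bs{M}\in\bb{P}(\cl{M})$ admits a nonzero left null vector $\bs{v}$, and then induct on the order in which vertices are colored black to show $v_i=0$ for every such $i$. When vertex $i$ is colored black because of some column $j$, the equation $\sum_k v_k \bs{M}_{kj}=0$ combined with (i) $v_{\tilde{k}}=0$ for previously-colored $\tilde{k}$ (inductive hypothesis) and (ii) $\bs{M}_{\tilde{i}j}=0$ for every currently-white $\tilde{i}\neq i$ (forced by $\cl{M}_{\tilde{i}j}\notin\{*,?\}$) collapses to $v_i\bs{M}_{ij}=0$; since $\cl{M}_{ij}=*$ implies $\bs{M}_{ij}\neq 0$, we get $v_i=0$. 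The assumption $W(\cl{M})=\emptyset$ means every vertex is eventually colored black, so $\bs{v}=\bs{0}$, a contradiction.

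For the reverse direction, suppose $W=W(\cl{M})\neq\emptyset$. I would construct a pair $(\bs{M},\bs{v})$ with $\bs{v}^{\T}\bs{M}=\bs{0}$ by taking $v_i=1$ for $i\in W$ and $v_i=0$ otherwise, and then choosing the entries of $\bs{M}$ column by column so that the $W$-indexed rows sum to zero in each column. For any column $j$ with no $*$-entry in the $W$-rows, setting all the $?$-entries there to zero trivially zeroes the column sum. When some row $i\in W$ has $\cl{M}_{ij}=*$, the termination condition of \Cref{alg:colorchange} guarantees a second row $\tilde{i}\in W\setminus\{i\}$ with $\cl{M}_{\tilde{i}j}\in\{*,?\}$, which leaves enough degrees of freedom to pick nonzero values for the $*$-entries and compensating values for the $?$-entries so that the entries in column $j$ and rows in $W$ sum to zero. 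The remaining rows outside of $W$ can be filled with any realization consistent with the pattern (e.g., $1$ for $*$ and $0$ for $?$).

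The main subtlety I would need to handle cleanly is the corner case in which a row $i\in W$ has no $*$-entry at all (possible in particular when $|W|=1$, since the rule then cannot have been stuck by the existential clause); in that case the construction above sets the whole row $i$ to zero by sending every $?$-entry to zero, and $\bs{M}$ is already rank-deficient with $\bs{v}=\bs{e}_i$ as a witness. Beyond this, the technical obstacle is verifying that the per-column assignment is always feasible under the pattern constraints and that at least two usable entries within $W$ are guaranteed in every problematic column; once the stopping condition is unpacked this becomes a short case analysis over the sizes of $\{i\in W:\cl{M}_{ij}=*\}$ and $\{i\in W:\cl{M}_{ij}=?\}$, with the prescribed column-sum equal to zero being achievable in each case.
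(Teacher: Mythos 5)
This theorem is imported from \cite{jia2020unifying}; the paper states it without proof, so there is no in-paper argument to compare against. Judged on its own merits, your proof is correct and complete. The forward direction is the standard argument: for any $\bs{M}\in\bb{P}(\cl{M})$ and any $\bs{v}$ with $\bs{v}^{\T}\bs{M}=\bs{0}$, induction on the coloring rounds (noting that vertices deleted in the same round still see each other as white, so the certifying column of $i$ has fixed zeros in all other white rows) forces $v_i\bs{M}_{ij}=0$ with $\bs{M}_{ij}\neq 0$, hence $\bs{v}=\bs{0}$. The reverse direction's construction is also sound: taking $\bs{v}$ to be the indicator of $W=W(\cl{M})\neq\emptyset$, the stuck condition of \Cref{alg:colorchange} guarantees that every column containing a $*$ in a $W$-row contains at least one further $*$ or $?$ in another $W$-row, and your case split on the counts of $*$ and $?$ entries (at least two stars: assign $1,\dots,1,-(s-1)$; one star plus a question mark: assign $1$ and $-1$) always yields a pattern-consistent realization whose $W$-rows sum to zero in every column. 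The corner case you flag ($|W|=1$, hence a row with no $*$ entries) is in fact already covered by the generic construction, since every column then falls into the ``no $*$ in the $W$-rows'' case, but identifying it explicitly does no harm.
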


Further, we define a zero forcing set of a pattern matrix.

\begin{defn}\label{defn:ZF}
    For a pattern matrix $\cl{M}\in\{0,*,?\}^{ p\times q}$, the set $V^{0}\subseteq\{1,2,\ldots,p\}$ is called its zero forcing set if \Cref{alg:colorchange} returns an empty set $W(\cl{M})$ when initilization in Step~\ref{step:initalize} is changed to $W\gets\{1,2,\ldots,p\}\setminus V^0$.
\end{defn}
Relying on the above results on strong structural controllability, the next section presents our problem formulation.

\section{Minimal Structural Modification Problem}\label{sec:probformulation}
Consider a strongly structurally uncontrollable system $(\bar{\cl{A}}\in\{0,*,?\}^{n\times n},\bar{\cl{B}}\in\{0,*,?\}^{n\times m})$ as given in \Cref{sec:prelim}. We address the problem of making minimum changes to the pattern matrix $\bar{\cl{B}}$ such that the resulting system is strongly structurally controllable. To quantify the change in the pattern matrix, we use the Hamming distance metric, $\operatorname{dist}(\cl{B},\bar{\cl{B}}) = \lv\{ (i,j): \cl{B}_{ij}\neq\bar{\cl{B}}_{ij}\}\rv$. Further, using \Cref{thm:strongcontrol,thm:colorchange}, the resulting optimization problem is  
    \begin{multline}\label{eq:problem_colorchange}
\underset{\cl{B}\in\{0,*,?\}^{n \times m}}{\arg\min}\;\operatorname{dist}(\cl{B},\bar{\cl{B}})\\\text{s. t.}\; 
        W([\bar{\cl{A}} \;\; \cl{B}])\cup W([\cl{Q}(\bar{\cl{A}})\;\; \cl{B}])=\emptyset,
    \end{multline}
where $W(\cdot)$ is \Cref{alg:colorchange}'s output and $\cl{Q}(\cdot)$ is as in \eqref{eq:mod_A}.

 {We next discuss the feasibility conditions of the problem, using the notion of zero forcing set in \Cref{defn:ZF}.
\begin{theorem}\label{thm:feasibility}
    Consider a given structured system $(\bar{\cl{A}}\in\{0,*,?\}^{n\times n},\bar{\cl{B}}\in\{0,*,?\}^{n\times m})$ and $\cl{Q}(\bar{\cl{A}})$ as defined in \eqref{eq:mod_A}. The structural modification problem \eqref{eq:problem_colorchange} is feasible only if  
    \begin{equation}\label{eq:nece}
        m\geq \max\{Z(\bar{\cl{A}}),Z(\cl{Q}(\bar{\cl{A}}))\},
    \end{equation}
    where $Z(\cdot)$ is the zero forcing number i.e., the minimum cardinality $|V^0|$ over all the zero forcing sets $V^0$. Also, the problem \eqref{eq:problem_colorchange} is feasible if 
    \begin{multline}\label{eq:suff}
        m\geq Z^{\mathrm{joint}} = \min_{V\subset\{1,2,\ldots,n\}} |V|\\
        \text{s. t.}\; V\;\text{is a common zero forcing set of}\; \bar{\cl{A}} \;\text{and} \; \cl{Q}(\bar{\cl{A}}).
    \end{multline}
\end{theorem}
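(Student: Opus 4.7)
The argument splits by direction. For sufficiency, suppose $m \geq Z^{\mathrm{joint}}$ and pick a common zero forcing set $V = \{v_1,\ldots,v_{|V|}\}$ of $\bar{\cl{A}}$ and $\cl{Q}(\bar{\cl{A}})$. I would construct an explicit feasible $\cl{B}$ by setting $\cl{B}_{v_k, k} = *$ for $k = 1,\ldots,|V|$, with all other entries in those columns equal to $0$, and the remaining $m-|V|$ columns identically zero. Each of the first $|V|$ columns satisfies the firing condition at the initial iteration of \Cref{alg:colorchange} applied to $[\bar{\cl{A}} \; \cl{B}]$, so a single pass forces every element of $V$ out of $W$. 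The remaining execution is then identical to \Cref{alg:colorchange} on $\bar{\cl{A}}$ alone starting from $W = \{1,\ldots,n\}\setminus V$, which terminates empty by the zero forcing property of $V$; the same reasoning applies verbatim to $[\cl{Q}(\bar{\cl{A}})\;\cl{B}]$, yielding feasibility.

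For necessity, the structural observation at the heart of the proof is that any column of a pattern matrix can fire at most once during \Cref{alg:colorchange}: its firing condition requires that the column's set of $*$ or $?$ entries intersected with the white set $W$ be a singleton containing a $*$ row, and this intersection is monotone non-increasing across rounds and becomes empty immediately after a firing. Given a feasible $\cl{B}$, fix an execution on $[\bar{\cl{A}} \; \cl{B}]$ that terminates with $W = \emptyset$ and attribute each removed vertex $i \in \{1,\ldots,n\}$ to a specific column $j_i$ whose firing removed it. Let $V^{\cl{B}} = \{i : j_i \text{ is a column of } \cl{B}\}$; the one-firing-per-column bound instantly gives $|V^{\cl{B}}| \leq m$. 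The remaining task is to show that $V^{\cl{B}}$ is a zero forcing set of $\bar{\cl{A}}$, which I would establish by running \Cref{alg:colorchange} on $\bar{\cl{A}}$ starting from $W'_0 = \{1,\ldots,n\} \setminus V^{\cl{B}}$ and proving by induction on rounds that $W'_r \subseteq W_r \setminus V^{\cl{B}}$, where $W_r$ denotes the white set in the original execution. The inductive step uses that any firing of an $\bar{\cl{A}}$-column that was valid in $W_{r-1}$ remains valid in the subset $W'_{r-1}$, because restricting to a smaller white set can only reduce the count of active entries of the column. Applying the same argument to $[\cl{Q}(\bar{\cl{A}})\;\cl{B}]$ gives $Z(\cl{Q}(\bar{\cl{A}})) \leq m$, which completes \eqref{eq:nece}.

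The main obstacle is executing the simulation step in the necessary direction cleanly: one must avoid getting tangled in the interleaving of $\cl{B}$- and $\bar{\cl{A}}$-firings across rounds, and the monotonicity of $W$ together with the ``firings only become easier on a smaller $W$'' principle is precisely what converts the one-firing-per-column bound into a bound on $Z(\bar{\cl{A}})$. It is also worth noting that \eqref{eq:nece} and \eqref{eq:suff} are in general genuinely different bounds, since a single $\cl{B}$ must simultaneously witness zero forcing sets for both $\bar{\cl{A}}$ and $\cl{Q}(\bar{\cl{A}})$; this is why the theorem is stated as a pair of one-sided conditions rather than an ``if and only if.''
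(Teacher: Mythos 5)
Your proof is correct and takes essentially the same route as the paper: the identical diagonal-submatrix construction (one $*$ per row of $V$) for sufficiency, and the same one-color-change-per-column observation for necessity. The only difference is that you spell out, via the monotonicity/simulation induction $W'_r \subseteq W_r \setminus V^{\cl{B}}$, the step the paper compresses into ``by the definition of zero forcing number,'' which adds rigor without changing the argument.
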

\begin{proof}
    See \Cref{app:feasibility}.
\end{proof} 
The necessary condition is not always sufficient and vice versa. For example, consider the following pattern matrices,
\begin{equation}
    \bar{\cl{A}}^{(1)}=\begin{bmatrix}
    0 & ? & ?\\ 0 & * & ? \\ 0 & ? & ?
\end{bmatrix}\!, \bar{\cl{A}}^{(2)}=\begin{bmatrix}
    0 & 0 & 0\\ 0 & * & 0 \\ 0 & 0 & *
\end{bmatrix} \text{and} \; \cl{B}^{*}=\begin{bmatrix}
    * & 0 \\ * & 0 \\  0 & *
\end{bmatrix}\!.
\end{equation}
For $\bar{\cl{A}} = \bar{\cl{A}}^{(1)}$, the term $\max\{Z(\bar{\cl{A}}),Z(\cl{Q}(\bar{\cl{A}}))\}=2$, yet the problem \eqref{eq:problem_colorchange} is infeasible if $m= 2$. For $\bar{\cl{A}}=\bar{\cl{A}}^{(2)}$, we have $Z^{\mathrm{joint}}=3$, but with $m=2$, the matrix $\cl{B}^{*}$ is feasible.
} So, the feasibility set of \eqref{eq:problem_colorchange} is hard to obtain and the following theorem reformulates the original problem in \eqref{eq:problem_colorchange} to another equivalent feasible problem.
\begin{theorem}\label{thm:eq_opt}
    For a structured system $(\bar{\cl{A}}\in\{0,*,?\}^{n\times n},\bar{\cl{B}}\in\{0,*,?\}^{n\times m})$, the the structural modification problem  in \eqref{eq:problem_colorchange}, {if feasible}, is equivalent to following optimization problem,
    \begin{equation}
    \label{eq:cost_opt}
\underset{\cl{B}\in\bb{B}}{\arg\min}\; c(\cl{B}).
    \end{equation}
Here, the feasible set $\bb{B}$ and the cost function $c(\cdot)$ are
\begin{align}\label{eq:feasibleset}
\bb{B} &= \lc \cl{B}\in\{0,*,?\}^{n \times m}: \cl{B}_{ij}\neq ?\; \forall(i,j) \;\text{with}\; \bar{\cl{B}}_{ij}\neq ?\rc\\
c(\cl{B}) &= \operatorname{dist}(\cl{B},\bar{\cl{B}}) + \epsilon \lb \lv W([\bar{\cl{A}} \;\; \cl{B}])\rv+\lv W([\cl{Q}(\bar{\cl{A}})\;\; \cl{B}])\rv\rb,\label{eq:costfnt}
\end{align}
    where the set $W(\cdot)$ is the output of \Cref{alg:colorchange}, $\cl{Q}(\cdot)$ is defined in \eqref{eq:mod_A}, and the constant $\epsilon>nm$.
\end{theorem}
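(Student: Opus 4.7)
I would prove the equivalence in two reductions: first, show that \eqref{eq:problem_colorchange} admits an optimum in $\bb{B}$, so that it is equivalent to its restriction to $\bb{B}$; second, exploit $\epsilon>nm$ to conclude that every minimizer of \eqref{eq:cost_opt} satisfies the color-change constraint, so it coincides with a minimizer of the $\bb{B}$-restricted version of \eqref{eq:problem_colorchange}.

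For the first reduction, take any $\cl{B}$ feasible for \eqref{eq:problem_colorchange} and construct $\cl{B}'$ by replacing each entry at which $\cl{B}_{ij}=?$ but $\bar{\cl{B}}_{ij}\neq ?$ with $\bar{\cl{B}}_{ij}$. Each replacement strictly decreases $\operatorname{dist}(\cdot,\bar{\cl{B}})$ by one, so it suffices to show that feasibility is preserved. For a single such replacement, I would analyze \Cref{alg:colorchange} directly: promoting $(j,i)$ from $\bb{E}_?$ to $\bb{E}_*$ (when $\bar{\cl{B}}_{ij}=*$) strictly enables the first clause of Step~4 for vertex $i$ via column $j$ while leaving the blocking clause $(j,\tilde{i})\notin\bb{E}_*\cup\bb{E}_?$ unchanged for every other vertex (since $\bb{E}_*\cup\bb{E}_?$ is unchanged at $(j,i)$); removing $(j,i)$ from $\bb{E}_*\cup\bb{E}_?$ (when $\bar{\cl{B}}_{ij}=0$) strictly eases the blocking clause for every other vertex $i'\neq i$ tested via column $j$ while leaving the clauses for $i$ itself intact. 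Denoting by $W^{(t)}$ and $\tilde{W}^{(t)}$ the white-vertex sets at iteration $t$ of the original and modified runs, I would establish the invariant $\tilde{W}^{(t)}\subseteq W^{(t)}$ by induction: for any $i\in\tilde{W}^{(t)}$, the membership $i\in W_{\mathrm{del}}^{(t+1)}$ forces $i\in\tilde{W}_{\mathrm{del}}^{(t+1)}$, because the first clause persists (either unchanged or newly enabled) and the blocking clause is verified over the smaller subset $\tilde{W}^{(t)}\setminus\{i\}\subseteq W^{(t)}\setminus\{i\}$. Emptiness of $W$ therefore propagates to emptiness of $\tilde{W}$ on both $[\bar{\cl{A}}\;\cl{B}']$ and $[\cl{Q}(\bar{\cl{A}})\;\cl{B}']$, so $\cl{B}'\in\bb{B}$ is feasible with strictly smaller distance, completing the first reduction.

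For the second reduction, $\operatorname{dist}(\cl{B},\bar{\cl{B}})\leq nm$ for every $\cl{B}$. Feasibility of \eqref{eq:problem_colorchange} together with the first reduction yields some $\tilde{\cl{B}}\in\bb{B}$ with both white vertex sets empty, giving $c(\tilde{\cl{B}})\leq nm<\epsilon$. Any $\cl{B}\in\bb{B}$ having at least one white vertex satisfies $c(\cl{B})\geq\epsilon>c(\tilde{\cl{B}})$ and hence cannot minimize \eqref{eq:cost_opt}. Therefore every minimizer of \eqref{eq:cost_opt} satisfies the color-change constraint and minimizes $\operatorname{dist}(\cdot,\bar{\cl{B}})$ over $\bb{B}$ subject to it, which by the first reduction agrees with \eqref{eq:problem_colorchange}. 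The main obstacle is the iterative monotonicity in the first reduction, since the asymmetric updates to $\bb{E}_*$ and $\bb{E}_?$ might a priori allow a vertex removable in the original run to become unremovable in the modified run; the simultaneous-induction formulation $\tilde{W}^{(t)}\subseteq W^{(t)}$ circumvents this by using the fact that a smaller white-vertex set only tightens the blocking clause in the favorable direction.
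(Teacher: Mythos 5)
Your proof is correct, and while it shares the $\epsilon>nm$ penalty argument with the paper, it handles the key step --- showing that optima can be taken in $\bb{B}$ --- by a genuinely different mechanism. The paper works semantically: it first establishes that \eqref{eq:problem_colorchange} is equivalent to minimizing $c$ over all of $\{0,*,?\}^{n\times m}$, and then rules out a minimizer $\tilde{\cl{B}}\notin\bb{B}$ by observing that $\bb{P}(\tilde{\cl{B}})=\bb{P}(\tilde{\cl{B}}^{(0)})\cup\bb{P}(\tilde{\cl{B}}^{(*)})$, so both refinements of a $?$ entry inherit strong structural controllability and the one matching $\bar{\cl{B}}_{ij}$ has strictly smaller distance; this route leans on \Cref{thm:strongcontrol,thm:colorchange} to pass between the color-change constraint and controllability of the pattern classes. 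You instead prove a combinatorial monotonicity lemma for \Cref{alg:colorchange} directly: replacing a $?$ in the $\cl{B}$ block by $0$ or $*$ leaves every $*$ entry a $*$ and every $0$ entry a $0$, so any removal certificate $(i,j)$ valid for the original run remains valid for the modified run over the smaller white set, yielding the invariant $\tilde{W}^{(t)}\subseteq W^{(t)}$ and hence preservation of feasibility. Your argument is self-contained at the level of the constraint in \eqref{eq:problem_colorchange} and never needs the controllability semantics, at the cost of a somewhat longer induction (including the minor bookkeeping of extending both runs past termination so the inclusion at the final iteration forces $\tilde{W}=\emptyset$); the paper's argument is shorter but shows slightly more than needed (both refinements are feasible, not just the one matching $\bar{\cl{B}}_{ij}$). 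The orders of the two reductions are also swapped, which matters only in that the paper's second step relies on its first, whereas your two reductions are independent.
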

\begin{proof}
    See \Cref{app:eq_opt}.
\end{proof}
{The cost function ensures that all feasible solutions satisfy $c(\cl{B})<\epsilon$ (see details in \Cref{app:feasibility}), i.e., the solution to \eqref{eq:cost_opt} ensures structural controllability only if the corresponding cost $c(\cl{B})<\epsilon$. Also, from \Cref{thm:feasibility}, we derive that the optimal solution changes at least $\max\{Z(\bar{\cl{A}}),Z(\cl{Q}(\bar{\cl{A}}))\}$ and at most $Z^{\mathrm{joint}}$ columns. Thus, the optimal cost $c^*$ satisfies
\begin{equation}
   \max\{Z(\bar{\cl{A}}),Z(\cl{Q}(\bar{\cl{A}}))\}\leq  c^*\leq nZ^{\mathrm{joint}}
\end{equation}} Moreover, if $\bar{\cl{B}}$ does not have any arbitrary entries (i.e.,$?$ entries), then $\bb{B} = \{0,*\}^{n \times m}$ and the optimal solution also does not have any $?$ entries. Hence, our formulation and algorithms directly apply to the setting without $?$ entries.


\section{Structural Modification Algorithms}\label{sec:algos}
We present two algorithms to solve the minimal structural modification, starting with a greedy approach.

\subsection{Greedy Algorithm}\label{sec:greedy}
To design the greedy algorithm, we note that every iteration of \Cref{alg:colorchange} considers the sub-pattern matrix $\cl{M}^{W}$ of the input pattern matrix $\cl{M}$ restricted to the rows indexed by the set of white nodes $W$. The algorithm removes an element $i$ from the set $W$ only if the sub-pattern matrix  $\cl{M}^{W}$ has a column $j$ with one $*$ entry and zeros elsewhere. Then, the $i$th row of $\cl{B}$ corresponding to the $*$ entry in the $j$th column of $\cl{M}^{W}$ is removed from $W$. Therefore, in the next iteration, the $j$th column of $\cl{M}^{W}$ has all zeros and can not induce more color changes. Consequently, once a column of the input pattern matrix induces a color change, it can not induce any other color change in the subsequent iterations. Based on these observations, our greedy algorithm iteratively changes one column of the current iterate $\cl{B}$ that is locally optimal in each iteration. Also, once a column of $\cl{B}$ is changed, it is kept fixed in the subsequent iterations. So, in every iteration of the greedy algorithm, we first compute the set of the white nodes in $[\bar{\cl{A}} \;\; \cl{B}]$ and $[\cl{Q}(\bar{\cl{A}})\;\; \cl{B}]$ using \Cref{alg:colorchange} and the previous iterate $\cl{B}$, i.e.,
\begin{equation}\label{eq:control_cost}
    I = W([\bar{\cl{A}} \;\; \cl{B}])\cup W([\cl{Q}(\bar{\cl{A}})\;\; \cl{B}]).
\end{equation}
Then, the greedy algorithm solves for the optimal column of $\cl{B}$ restricted to rows indexed by $I$, 
\begin{equation}\label{eq:greedy_opt}
    (i^*\!,j^*)\!=\!\underset{\substack{i\in I, j\in J}}{\arg\min}\; c(\tilde{\cl{B}})\;
    \text{s. t.}\; \tilde{\cl{B}}_{\tilde{i}\tilde{j}}=\!
    \begin{cases}
        * &\!\!\text{if}\;\!(\tilde{i},\tilde{j})=(i,j)\\
        0 &\!\! \text{if}\;\!\tilde{j}=j, \; \tilde{i}\in I\!\setminus\! \{i\} \\
        \cl{B}_{\tilde{i}\tilde{j}} &\!\! \text{otherwise},
    \end{cases}
\end{equation}
where $J$ is the index of columns of $\cl{B}$ {that are identical} to the corresponding columns of $\bar{\cl{B}}$, i.e., unchanged in the previous iterations.  Here, $j^*$ is the column changed in the current iteration, and $i^*$ denotes the location of $*$ in the $j$th column of $\cl{B}$. The greedy algorithm stops when {the feasibility set of \eqref{eq:greedy_opt} is empty, i.e., $I=\emptyset$ or $J=\emptyset$}. The overall greedy algorithm is summarized in \Cref{alg:greedy}.

\begin{algorithm}[hptb]
\caption{Greedy structural modification}
\label{alg:greedy}
\begin{algorithmic}[1]
\Statex \textit {\bf Input:} System $(\bar{\cl{A}}\in\{0,*,?\}^{n\times n},\bar{\cl{B}}\in\{0,*,?\}^{n\times m})$
\State Initialize $\cl{B}=\bar{\cl{B}}$ and $J=\{1,2,\ldots,m\}$
\State Compute $I$ using \eqref{eq:control_cost}
\While {{$I\neq \emptyset $ and $J\neq\emptyset$ }}
\State $\cl{B}_{i^*j^*}\gets *$ and $\cl{B}_{\tilde{i}j^*}\gets 0$ for $\tilde{i}\neq I\setminus\{i^*\}$ using \eqref{eq:greedy_opt}
\State Update $I$ using \eqref{eq:control_cost} and $J\gets J\setminus\{j^*\}$
\EndWhile
\Statex \textit{\bf Output:} Modified input pattern matrix $\cl{B}\in\{0,*,?\}^{n\times m}$
\end{algorithmic}
\end{algorithm}

The greedy algorithm is simple to implement, but it does not guarantee the solution's optimality. The following proposition presents {a case} where the cost returned by the greedy solution can be arbitrarily larger than the optimal cost. 
\begin{prop}\label{prop:greedyopt}
    For any given $\delta>0$, there exists integers $n,m>0$ and a structured system $(\bar{\cl{A}}\in\{0,*,?\}^{n\times n},\bar{\cl{B}}\in\{0,*,?\}^{n\times m})$ such that the solution $\cl{B}_{{\rm greedy}}$ returned by the greedy algorithm in \Cref{alg:greedy}  satisfies
    \begin{equation}\label{eq:greedy_bnd}
        c(\cl{B}_{{\rm greedy}}) > \delta \ls\min_{\cl{B}\in\{0,*,?\}^{n \times m}}\; c(\cl{B})\rs.
    \end{equation}
\end{prop}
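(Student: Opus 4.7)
The plan is to construct, for each integer $k \geq 1$, a structured system with $n = k+1$ and $m = k$ on which the greedy algorithm is forced to exhaust its column budget $J$ before clearing the joint white set, while an optimal solution achieves feasibility at Hamming cost $k+1$. Specifically, let $\bar{\cl{A}}^{(k)}$ be the $n \times n$ pattern with $\bar{\cl{A}}^{(k)}_{11} = *$ and all other entries $0$, and let $\bar{\cl{B}}^{(k)}$ be the $n \times m$ all-zero pattern. Then $\cl{Q}(\bar{\cl{A}}^{(k)}) = \operatorname{diag}(?, *, *, \ldots, *)$. A direct application of \Cref{alg:colorchange} gives $W([\bar{\cl{A}}^{(k)}\;\bar{\cl{B}}^{(k)}]) = \{2, \ldots, n\}$ (the first column of $\bar{\cl{A}}$ removes row $1$, no other column acts) and $W([\cl{Q}(\bar{\cl{A}}^{(k)})\;\bar{\cl{B}}^{(k)}]) = \{1\}$ (the diagonal $*$'s of $\cl{Q}$ remove rows $2, \ldots, n$ while the $?$ at $(1,1)$ preserves row $1$), so $|W_1| + |W_2| = k + 1$ and the initial $I$ used by \Cref{alg:greedy} is $\{1, \ldots, n\}$.

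For the optimum, I exhibit $\cl{B}^\star$ whose first $k-1$ columns each carry a single $*$ at row $j+1$ and whose last column carries $*$'s at rows $1$ and $n$, with zeros elsewhere. In $[\bar{\cl{A}}^{(k)}\;\cl{B}^\star]$ row $1$ is blackened by the first column of $\bar{\cl{A}}$, after which each column of $\cl{B}^\star$ restricted to $W_1$ is a single $*$ and blackens its row. In $[\cl{Q}(\bar{\cl{A}}^{(k)})\;\cl{B}^\star]$ rows $2, \ldots, n$ are blackened by the diagonal of $\cl{Q}$, after which the last column of $\cl{B}^\star$ restricted to $\{1\}$ is a single $*$ and blackens row $1$. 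Thus $c(\cl{B}^\star) = k + 1$. The matching lower bound is immediate: since $\bar{\cl{B}}^{(k)}$ contains no $?$ entries, the cost equals the number of $*$'s in any feasible $\cl{B}$, and feasibility demands at least one $*$ in row $1$ (to blacken row $1$ in the second test, which $\cl{Q}$ cannot) plus at least one $*$ in each of rows $2, \ldots, n$ (to blacken those rows in the first test), giving $k+1$ distinct entries.

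The core of the argument concerns greedy. Because $\bar{\cl{B}}^{(k)} = 0$ and the greedy update in \Cref{alg:greedy} leaves entries of column $j^\star$ outside $I$ untouched, every modified column becomes a single-$*$ column at row $i^\star$ with $0$'s everywhere else. Such a column blackens row $i^\star$ in whichever test still has it white, but $W_1 \subseteq \{2, \ldots, n\}$ and $W_2 \subseteq \{1\}$ remain disjoint throughout the execution, so the new forcing reduces exactly one of $|W_1|, |W_2|$ by one. Hence each greedy iteration drops $|W_1| + |W_2|$ by exactly one, while the Hamming distance increases by one. After $m = k$ iterations, $J$ is empty and $|W_1| + |W_2| \geq 1$, so $c(\cl{B}_{\mathrm{greedy}}) \geq \epsilon > nm = k(k+1)$, whereas $c(\cl{B}^\star) = k+1$. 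The ratio $c(\cl{B}_{\mathrm{greedy}}) / c(\cl{B}^\star) > k(k+1) / (k+1) = k$, so choosing any integer $k > \delta$ yields a system satisfying \eqref{eq:greedy_bnd}.

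The main obstacle is justifying the ``exactly one'' claim, i.e., verifying that no greedy step can replicate the optimum's dual-purpose column (the last column of $\cl{B}^\star$, which places $*$'s at both row $1$ and row $n$) within a single iteration. This follows because the greedy rule explicitly zeros every entry of column $j^\star$ at rows in $I \setminus \{i^\star\}$, precluding the simultaneous creation of two $*$'s in that column; and once the budget $J$ is exhausted, greedy has no opportunity to revisit earlier columns, locking the output into infeasibility.
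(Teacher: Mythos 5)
Your proof is correct, but it reaches the conclusion by a genuinely different route than the paper. The paper's construction (a banded $\bar{\cl{A}}$ with $?$ on the diagonal, $n=m$, $\cl{Q}(\bar{\cl{A}})=\bar{\cl{A}}$) is engineered so that greedy \emph{does} terminate with a feasible pattern matrix, but one of cost $n-2$ against an optimum of $4$; the unboundedness of the ratio comes entirely from the Hamming-distance term. Your construction instead starves greedy of columns: since the update rule in \eqref{eq:greedy_opt} can only ever produce a single-$*$ column restricted to $I$, and since $W([\bar{\cl{A}}\;\cl{B}])\subseteq\{2,\ldots,n\}$ and $W([\cl{Q}(\bar{\cl{A}})\;\cl{B}])\subseteq\{1\}$ stay disjoint with no possible cascades, each iteration removes exactly one white vertex while consuming one column, so with $m=k$ columns and $k+1$ initial white vertices greedy exits with $J=\emptyset$ and an infeasible iterate, and the ratio is driven by the $\epsilon$-penalty rather than by the distance term. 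Both arguments prove the proposition as stated; your version additionally exhibits the failure mode where greedy cannot even certify controllability, whereas the paper's shows greedy can be arbitrarily wasteful even when it succeeds, which is the qualitatively stronger indictment and is what the numerical example in \Cref{fig:CornerCase} illustrates. All the supporting computations in your argument check out: $\cl{Q}(\bar{\cl{A}}^{(k)})=\operatorname{diag}(?,*,\ldots,*)$, the two white sets are $\{2,\ldots,n\}$ and $\{1\}$, the two-stage forcing by the dual-purpose last column of $\cl{B}^\star$ is legitimate (its second $*$ only becomes a valid force after row $1$, respectively rows $2,\ldots,n$, are blackened in the first round), and the optimal cost is exactly $k+1$. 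The only cosmetic remark is that the lower bound $\min_{\cl{B}} c(\cl{B})\geq k+1$ is not needed; the upper bound $c(\cl{B}^\star)=k+1$ already suffices for \eqref{eq:greedy_bnd}.
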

\begin{proof}
    See \Cref{app:greedyopt}.
\end{proof}

Since the worst-case performance of the greedy algorithm is not bounded, we propose another algorithm for structural modification that relies on MCMC.
\subsection{Monte Carlo Markov Chain Algorithm}\label{sec:mcmc}
MCMC is a powerful stochastic optimization technique used to solve discrete optimization problems. The underlying principle of this approach is to randomly generate pattern matrices from $\bb{B}$ using a probability distribution and return the sample with the lowest cost. A common technique to define the probability distribution is to use the softmax function to favor the pattern matrices with smaller $c(\cl{B})$,
\begin{equation}\label{eq:desired_dist}
    p_T(\cl{B}) =  e^{-c(\cl{B})/T}/G,
\end{equation}
where $G=\sum_{\cl{B}'\in\bb{B}}e^{-c(\cl{B}')/T}$ is the normalization constant of the distribution. As $T$ gets smaller, for $\cl{B}\notin\underset{\cl{B}\in\bb{B}}{\arg\min}\; c(\cl{B})$,
\begin{equation}\label{eq:T_limit}
    \lim_{T\to 0}  p_T(\cl{B}) \!= \!\lim_{T\to 0}  \frac{ e^{-c(\cl{B})/T}}{\sum_{\cl{B}'\in\bb{B}}e^{-c(\cl{B}')/T}} = 0.
\end{equation}
Therefore, we arrive at the optimal solution when $T$ is small. Nonetheless, computing the distribution is cumbersome because the number of candidate solutions increases exponentially with $nm$. To solve this problem, we build a discrete-time Markov chain (DTMC), which converges to its stationary distribution equal to the desired distribution $p_T$ in \eqref{eq:desired_dist}. If we simulate the DTMC for a sufficiently long period for a small value of $T$, its state arrives at an optimal solution. 

In the following, we construct a DTMC $\{\cl{B}(t)\in\bb{B}\}_{t>0}$ whose stationary distribution is $p_T(\cl{B})$.
Since the state space is of size $|\bb{B}|$ from \Cref{thm:eq_opt}, the DTMC is defined by the one-step probability transistion matrix $\bs{P}_T\in[0,1]^{|\bb{B}|\times |\bb{B}|}$. Let $\cl{B}$ be the current state. We allow only transitions to the neighboring states that differ from the current state $\cl{B}$ by at most one entry, i.e., the set of next states is $\bb{S}(\cl{B})\cup\{\cl{B}\}$ where
\begin{equation}\label{eq:searchspace}
    \bb{S}(\cl{B}) = \{\cl{B}'\in\bb{B}: \operatorname{dist}(\cl{B},\cl{B}')=1\}.
\end{equation}
Therefore, we arrive at
\begin{equation}\label{eq:tpm_zero}
    \bs{P}_T(\cl{B},\cl{B}') =0, \;\forall\; \cl{B}'\notin \bb{S}(\cl{B})\cup\{\cl{B}\}.
\end{equation}

{ We define the matrix $\bs{P}_T$ such that any neighboring states in $\bb{S}(\cl{B})$ is equally probable.} First,  we choose $(i,j)$ from $\{1,2,\ldots,n\}\times \{1,2,\ldots,m\}$ obeying the distribution $d(i,j)$, 
\begin{equation}\label{eq:sample_dist}
    d(i,j) = \begin{cases}
        1/|\bb{S}(\cl{B})|  & \text{ if } \bar{\cl{B}}_{ij}\neq ?\\
        2/|\bb{S}(\cl{B})|  & \text{ if } \bar{\cl{B}}_{ij}= ?,
    \end{cases}
\end{equation}
where $\bb{S}(\cl{B})$ is defined in \eqref{eq:searchspace} and 
its size is  $|\bb{S}(\cl{B})|=nm+|\{(i,j):\bar{\cl{B}}_{ij}= ?\}|<2nm$. {Then, we replace the $(i, j)$-th entry $\cl{B}_{ij}$ of $\cl{B}$ with a sample uniformly randomly chosen from $\{0,*\}\cup\{\bar{\cl{B}}_{ij}\}\setminus \{\cl{B}_{ij}\}$, to get the next state $\cl{B}'$.} We note that if $\bar{\cl{B}}_{ij}= ?$, the entry $\cl{B}_{ij}'$ has two choices, and when $\bar{\cl{B}}_{ij}\neq  ?$, the entry $\cl{B}_{ij}'$ has only one choice. So the above process chooses $\cl{B}'$ uniformly at random from $\bb{S}(\cl{B}(t))\subset\bb{B}$.

Further, we also assign a non-zero probability to continue in the current state. If $c(\cl{B})>c(\cl{B}')$, the DTMC jumps to the neighboring state $\cl{B}'$ with a lower cost with probability one. Also, if $c(\cl{B})<c(\cl{B}')$, the DTMC jumps to the neighboring state $\cl{B}'$ with a higher cost with probability $e^{[c(\cl{B})-c(\cl{B}')]/T}$. 
The resulting DTMC transition probabilities are as follows. 
\begin{equation}\label{eq:tpm_nonzero}
    \bs{P}_T(\cl{B},\cl{B}') = \frac{1}{|\bb{S}(\cl{B})|}\min\{1,e^{[c(\cl{B})-c(\cl{B}')]/T}\}\;\forall\; \cl{B}'\in\bb{S}(\cl{B}).
\end{equation}
Since $\sum_{\cl{B}'\in\bb{B}}\bs{P}_T(\cl{B},\cl{B}')=1$, we deduce 
\begin{equation}\label{eq:tpm_diag}
    \bs{P}_T(\cl{B},\cl{B}) \!=\!1-\frac{1}{|\bb{S}(\cl{B})|}\displaystyle\sum_{\cl{B}'\in\bb{S}(\cl{B})}\min\{1,e^{[c(\cl{B})-c(\cl{B}')]/T}\},
\end{equation}
from \eqref{eq:tpm_zero} and \eqref{eq:tpm_nonzero}.
The resulting algorithm is summarized in \Cref{alg:MCMC}. Here, we decrease $T$ in every $R_{\max}$ iteration. The initial large value for $T$ allows the MCMC to jump between the states quickly for a flexible random search. Later, we use small values to converge to the unique steady state distribution, as guaranteed by the following theorem.
\begin{algorithm}[hptb]
\caption{MCMC-based structural modification}
\label{alg:MCMC}
\begin{algorithmic}[1]
\Statex \textit {\bf Input:} System $(\bar{\cl{A}}\in\{0,*,?\}^{n\times n},\bar{\cl{B}}\in\{0,*,?\}^{n\times m})$
\State Initialize the parameters $R_{\max},T_{\rm stop}, \alpha<1$
\State Set $\epsilon=nm+1$, {$T > T_{\rm stop}$}, and $\cl{B} = \bar{\cl{B}}$
\While{$T_{\rm stop} \leq T$}
\For {$r=1,2,\ldots,R_{\max}$}
    \State Set $\cl{B}' \gets \cl{B}$
    \State Generate $(i,j)$ from the distribution $d$ in \eqref{eq:sample_dist}
    \State  \parbox[t]{7cm}{Choose $\cl{B}_{ij}'$ from $\{0,*\}\cup\{\bar{\cl{B}}_{ij}\}\setminus\{\cl{B}_{ij}\}$ uniformly at random }
    \State $\cl{B}  \gets \cl{B}'$ with probability $\min\{1,e^{[c(\cl{B})-c(\cl{B}')]/T}\}$
    \EndFor
    \State $T \gets T\alpha$ 
\EndWhile
\Statex \textit{\bf Output:} Input pattern matrix $\cl{B}$
\end{algorithmic}
\end{algorithm}
\begin{theorem}\label{thm:steadystate}
For any fixed $T>0$, the DTMC with states from $\bb{B}$ in \eqref{eq:feasibleset} and transition probabilities given by \eqref{eq:tpm_zero}, \eqref{eq:tpm_nonzero}, and \eqref{eq:tpm_diag} admits a unique steady state distribution given by  \eqref{eq:desired_dist}.
\end{theorem}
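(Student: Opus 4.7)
The plan is to apply standard Metropolis--Hastings theory: verify that the proposal mechanism produces a symmetric proposal kernel, check the detailed balance equations with respect to $p_T$, and then conclude uniqueness via irreducibility and aperiodicity on the finite state space $\bb{B}$.

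First I would observe that although $d(i,j)$ in \eqref{eq:sample_dist} is non-uniform over coordinate pairs, the overall probability of proposing any particular neighbor $\cl{B}' \in \bb{S}(\cl{B})$ is uniform. Indeed, if $\cl{B}$ and $\cl{B}'$ differ only in entry $(i,j)$, then the chance of selecting $(i,j)$ is $d(i,j)$, after which $\cl{B}'$ is drawn uniformly from $\{0,*\}\cup\{\bar{\cl{B}}_{ij}\}\setminus\{\cl{B}_{ij}\}$, which has size $1$ if $\bar{\cl{B}}_{ij}\neq ?$ and size $2$ if $\bar{\cl{B}}_{ij}=?$. The two cases multiply to give exactly $1/|\bb{S}(\cl{B})|$ in both situations. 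Moreover, the cardinality $|\bb{S}(\cl{B})| = nm + |\{(i,j): \bar{\cl{B}}_{ij}=?\}|$ does not depend on $\cl{B}$, so $|\bb{S}(\cl{B})|=|\bb{S}(\cl{B}')|$ for neighbors. This makes the proposal symmetric.

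Second, I would verify detailed balance: for any pair $\cl{B},\cl{B}' \in \bb{B}$ with $\cl{B}'\in\bb{S}(\cl{B})$, one has
\begin{equation}
p_T(\cl{B})\bs{P}_T(\cl{B},\cl{B}') = \frac{e^{-c(\cl{B})/T}}{G}\cdot \frac{\min\{1,e^{[c(\cl{B})-c(\cl{B}')]/T}\}}{|\bb{S}(\cl{B})|},
\end{equation}
and the identity $e^{-a}\min\{1,e^{a-b}\} = e^{-b}\min\{1,e^{b-a}\}$ (applied with $a = c(\cl{B})/T$, $b=c(\cl{B}')/T$) gives symmetry of the right-hand side, i.e., $p_T(\cl{B})\bs{P}_T(\cl{B},\cl{B}') = p_T(\cl{B}')\bs{P}_T(\cl{B}',\cl{B})$. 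For $\cl{B}' \notin \bb{S}(\cl{B})\cup\{\cl{B}\}$ both sides vanish by \eqref{eq:tpm_zero}, and the $\cl{B}=\cl{B}'$ case is trivial. Summing detailed balance over $\cl{B}'$ yields $p_T \bs{P}_T = p_T$, so $p_T$ is a stationary distribution.

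Third, I would establish irreducibility and aperiodicity. Any $\cl{B}\in\bb{B}$ can be transformed into any $\tilde{\cl{B}}\in\bb{B}$ by flipping one entry at a time while remaining in $\bb{B}$ (since the constraints defining $\bb{B}$ in \eqref{eq:feasibleset} are per-entry), and each such flip has positive probability by \eqref{eq:tpm_nonzero}; hence the chain is irreducible. For aperiodicity, $c$ is non-constant on $\bb{B}$ (changing an entry alters $\operatorname{dist}(\cdot,\bar{\cl{B}})$ by $\pm 1$), so by connectedness of the neighbor graph there exist adjacent states $\cl{B}\sim\cl{B}'$ with $c(\cl{B}) < c(\cl{B}')$; then the acceptance probability $e^{[c(\cl{B})-c(\cl{B}')]/T} < 1$ forces $\bs{P}_T(\cl{B},\cl{B}) > 0$ via \eqref{eq:tpm_diag}. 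A finite, irreducible, aperiodic Markov chain has a unique stationary distribution, which must be $p_T$.

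The only subtle step, and where I would concentrate the care in the write-up, is the symmetry of the proposal: the factor of $2$ in $d(i,j)$ for $?$-positions exactly cancels the factor of $2$ coming from the size of the uniform replacement set $\{0,*,?\}\setminus\{\cl{B}_{ij}\}$, and the resulting common value $1/|\bb{S}(\cl{B})|$ is independent of $\cl{B}$. Everything else is routine Metropolis--Hastings bookkeeping once that symmetry is made explicit.
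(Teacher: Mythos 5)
Your proof is correct and follows the standard Metropolis--Hastings route (symmetric uniform proposal, detailed balance with respect to $p_T$, then irreducibility and aperiodicity on the finite state space $\bb{B}$), which is precisely the argument the paper outsources to the cited reference rather than writing out; the key observation that the factor of $2$ in $d(i,j)$ at $?$-positions cancels the size-$2$ replacement set, so that every neighbor is proposed with probability $1/|\bb{S}(\cl{B})|$ and $|\bb{S}(\cl{B})|$ is state-independent, is exactly the point that needs to be made explicit. One minor imprecision: a single-entry flip does not always change $\operatorname{dist}(\cdot,\bar{\cl{B}})$ by $\pm 1$ (flipping between $0$ and $*$ at a position where $\bar{\cl{B}}_{ij}=?$ leaves it unchanged), but non-constancy of $c$ --- and hence your aperiodicity argument --- still holds, e.g.\ by flipping one entry of $\bar{\cl{B}}$ away from its original value, which changes $c$ by $1+\epsilon\Delta w\neq 0$ since $\epsilon>1$ and $\Delta w$ is an integer.
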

\begin{proof}
    {The proof is similar to the proof of \cite[Lemma 2 and Theorem 3]{yashashwi2019minimizing}, and hence, omitted.} 
\end{proof}
    
The above theorem establishes that when $T$ approaches $0$ and $R_{\max}$ goes to $\infty$, the pattern matrix returned by \Cref{alg:MCMC} is a solution to \eqref{eq:problem_colorchange} almost surely due to \eqref{eq:T_limit}. {However, in practice, $T\neq 0$, leading to an error in estimation, as characterized by the following result.
\begin{prop}\label{prop:prob_error}
    For any $T<\infty$, the MCMC algorithm in \Cref{alg:MCMC} converges to an optimal solution of \eqref{eq:cost_opt} with probability exceeding $\frac{B^*}{\lv\bb{B}\rv}$, where we define
    \begin{equation}
        B^* = \lv\underset{\cl{B}\in\bb{B}}{\arg\min}\; c(\cl{B})\rv<\lv\bb{B}\rv.
    \end{equation}
    Further, for any $\delta<1-\frac{B^*}{\lv\bb{B}\rv}$, the MCMC algorithm arrives at an optimal solution with probability $1-\delta$ if 
 \begin{equation}\label{eq:Tstopbound}
     T<\frac{1}{\log  (1/\delta-1)+\log(\lv\bb{B}\rv/B^*-1)}.
 \end{equation}
\end{prop}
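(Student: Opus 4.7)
The plan is to exploit \Cref{thm:steadystate}: the MCMC chain admits the stationary distribution $p_T$ from \eqref{eq:desired_dist}, so in the long run the probability of arriving at an element of the optimal set $\bb{B}^\star = \underset{\cl{B}\in\bb{B}}{\arg\min}\; c(\cl{B})$ equals $p_T(\bb{B}^\star) = \sum_{\cl{B}\in\bb{B}^\star} p_T(\cl{B})$. The whole proof reduces to lower bounding this quantity for finite $T$.

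Writing $c^\star = \min_{\cl{B}\in\bb{B}} c(\cl{B})$ and noting $|\bb{B}^\star| = B^\star$, I would first rewrite
\begin{equation*}
p_T(\bb{B}^\star) = \frac{B^\star\, e^{-c^\star/T}}{\sum_{\cl{B}\in\bb{B}} e^{-c(\cl{B})/T}} = \frac{B^\star}{B^\star + \sum_{\cl{B}\in\bb{B}\setminus\bb{B}^\star} e^{-[c(\cl{B}) - c^\star]/T}}.
\end{equation*}
For the first claim, since $c(\cl{B}) > c^\star$ for every $\cl{B}\notin\bb{B}^\star$, each summand in the denominator is strictly below $1$, giving $p_T(\bb{B}^\star) > B^\star/(B^\star + (|\bb{B}|-B^\star)) = B^\star/|\bb{B}|$. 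The strict inequality uses $B^\star < |\bb{B}|$ which holds since \eqref{eq:cost_opt} has at least one infeasible-for-controllability pattern.

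For the sharper bound I would use a uniform lower bound $c(\cl{B}) - c^\star \geq 1$ on the cost gap. The cost in \eqref{eq:costfnt} has the form $\text{(integer in $[0,nm]$)} + \epsilon\cdot\text{(nonnegative integer)}$, and \Cref{alg:MCMC} sets $\epsilon = nm+1$; a short case analysis on whether the integer $|W|$-components coincide for two configurations then shows any two distinct values of $c$ on $\bb{B}$ differ by at least $1$. Substituting $e^{-[c(\cl{B}) - c^\star]/T} \leq e^{-1/T}$ yields
\begin{equation*}
p_T(\bb{B}^\star) \geq \frac{B^\star}{B^\star + (|\bb{B}| - B^\star)\, e^{-1/T}},
\end{equation*}
and imposing $p_T(\bb{B}^\star) \geq 1-\delta$ rearranges (taking logarithms and inverting) directly to the stated bound \eqref{eq:Tstopbound}. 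The hypothesis $\delta < 1 - B^\star/|\bb{B}|$ is exactly what is required to make the right-hand side of the rearranged inequality positive, so that taking logarithms is legitimate.

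The hard part is justifying the unit cost gap $c(\cl{B}) - c^\star \geq 1$: it hinges on the choice $\epsilon = nm+1$ together with the fact that the Hamming distance term is bounded by $nm$, so any change in the integer $|W|$ components can never be cancelled by a change in the distance term. A secondary subtlety is that \Cref{thm:steadystate} only certifies convergence in the limit, so the phrase ``the MCMC algorithm converges to an optimal solution with probability $p$'' is to be read as the long-run probability under the stationary distribution, valid as $R_{\max}\to\infty$ at the prescribed temperature $T$.
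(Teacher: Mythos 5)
Your proposal is correct and follows essentially the same route as the paper's proof: both pass to the stationary distribution $p_T$, lower bound the normalizer $G$ using the unit cost gap $c(\cl{B})\geq c^*+1$ for non-optimal $\cl{B}$, and rearrange $\frac{B^*}{B^*+(\lv\bb{B}\rv-B^*)e^{-1/T}}\geq 1-\delta$ to obtain \eqref{eq:Tstopbound}. The only difference is that you explicitly justify the unit gap via the integrality of $\operatorname{dist}$, $\lv W(\cdot)\rv$, and $\epsilon=nm+1$, a step the paper asserts without comment.
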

\begin{proof}
    See \Cref{app:prob_error}.
\end{proof}
\Cref{prop:prob_error} depends on $B^*$, the number of optimal solutions of \eqref{eq:control_cost}, which is hard to compute. So, we choose the lower bound of $B^*\geq 1$ to compute $T_{\mathrm{stop}}$ for a desired error probability $\delta$, as follows:
\begin{align}
    T_{\mathrm{stop}}&<\frac{1}{\log  (1/\delta-1)+\log(\lv\bb{B}\rv-1)}\\
    &\leq\frac{1}{\log  (1/\delta-1)+\log(\lv\bb{B}\rv/B^*-1)}.
\end{align}
}
\section{Numerical Results}\label{sec:numericalresults}
For numerical evaluation, we choose the MCMC parameters $R_{\max} = 50000$, $T_{\mathrm{Stop}} = 10^{-10}$,  $\alpha=10^{-1}$, and $T=1$ as the starting value. In \Cref{fig:Changes}, the cost is always less than $nm<\epsilon$, indicating that the resulting system is strongly structurally controllable. So, the cost is the number of changes required to make the system controllable{\footnote{{Our code is available at \url{https://github.com/Gethuj/2024LCSS_StructralModification}}  }}.

\begin{figure*}[!htbp]
	\centering
	\begin{subfigure}[t]{0.3\textwidth}
		\centering
		\includegraphics[height=0.7\textwidth]{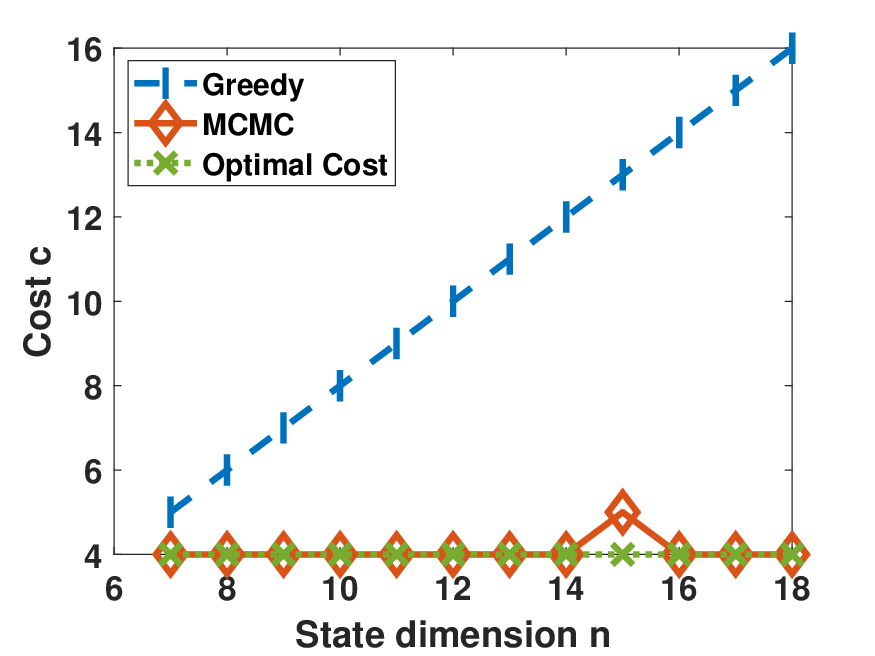}
		\caption{Example discussed in the proof of \Cref{prop:greedyopt} where $n=m$}
		\label{fig:CornerCase}
	\end{subfigure}
	\begin{subfigure}[t]{0.6\textwidth}
		\centering
		\includegraphics[height=0.35\textwidth]{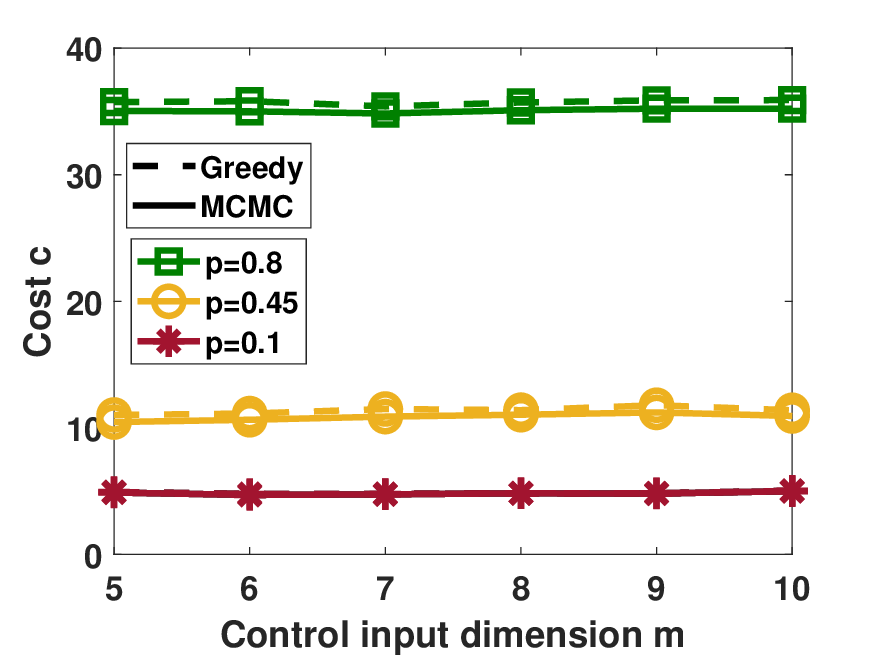}
            \includegraphics[height=0.35\textwidth]{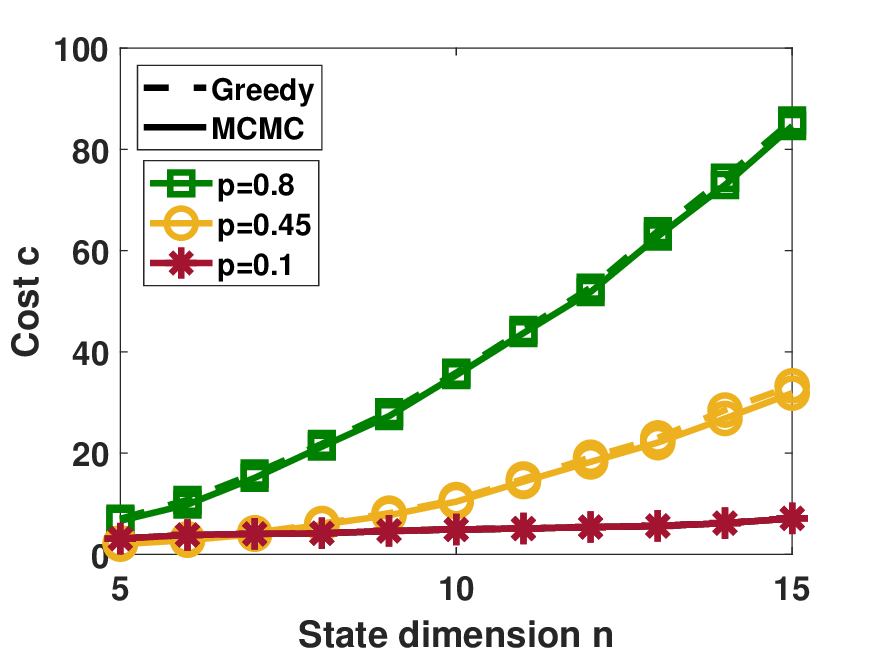}
		\caption{\ER~graph with $p$ as the probability of a $*$ entry}
		\label{fig:ER}
	\end{subfigure}\hspace{0.5cm}
	\caption{Variation of the number of changes required by the greedy and MCMC algorithms with the state and control input dimensions. For the special case discussed in the proof of \Cref{prop:greedyopt}, the greedy algorithm performed poorly while the MCMC algorithm returned the optimal cost in most cases. However,  for \ER~graphs, the greedy algorithm has a performance comparable to that of the MCMC algorithm. }
	\label{fig:Changes}
\end{figure*}

For the example discussed in the proof of \Cref{prop:greedyopt}, \Cref{fig:CornerCase} shows that the greedy solution's cost is $n-2$, whereas the MCMC algorithm returns the optimal solution in most cases. Next, we look at the average performance of the algorithms (over 100 trials) when the adjacency matrix of \ER~graphs is used to generate both $\bar{\cl{A}}$ and $\bar{\cl{B}}$. \Cref{fig:ER} shows the algorithms' performances with $p=0.1, 0.45,$ and $0.8$, where $p$ is the probability of having an edge between any two nodes in the graph ($*$ entry). Also, the existence of an edge in the graph is unknown ($?$ entry) with probability $0.1$. \Cref{fig:ER} indicates that the greedy and MCMC algorithms return similar solutions (mostly, the difference is less than $1$), illustrating that the greedy algorithm performs well
generally. 

From \Cref{fig:Changes}, the number of changes increases with the state dimension $n$, as expected. Further, if the system $(\bar{\cl{A}},\cl{B}')$ is controllable for a submatrix $\cl{B}'$ of a matrix $\cl{B}$, the system $(\bar{\cl{A}},\cl{B})$ is also controllable. So, the algorithm needs to change only a submatrix of $\bar{\cl{B}}$, making the required number of changes insensitive to the control dimension $m$ (\Cref{fig:ER}). We also infer that the number of changes increases with $p$, the probability of an $*$ entry. It is intuitive as a triangular structure of $\bs{B}$ favors strong structural controllability. So we need to make fewer changes if $\bar{\cl{B}}$ has a lot of zeros.

{Our experiments also indicated that the greedy algorithm runs four-order faster than the MCMC algorithm (details omitted due to lack of space). Nonetheless, we highlight that simplistic greedy approaches can yield arbitrarily poor solutions as shown in \Cref{prop:greedyopt}.}
\section{Conclusion}
We addressed the problem of making minimal changes to the input pattern matrix of a structured system to ensure strong structural controllability. We offered a greedy algorithm and an MCMC-based solution with provable guarantees. Our results open new interesting directions for future work, such as strong structural controllability under restricted structural modifications, its robustness to perturbations in the state matrices, and extensions to time-varying systems.
\appendices
\crefalias{section}{appendix}
{\section{Proof of \Cref{thm:feasibility}}\label{app:feasibility}
To prove the necessary conditions, consider \Cref{alg:colorchange} with $\cl{M}=[\bar{\cl{A}}\;\;\cl{B}]$ as the input, for some $\cl{B}\in\{0,*,?\}^{n\times m}$. In every iteration, \Cref{alg:colorchange} removes an entry $i$ from the set $W$ if there exists a column $j$ such that $\cl{M}_{ij}=*$ and $\cl{M}_{\tilde{i}j}=0$ for all $\tilde{i}\in W\!\setminus\!\{i\}$. Therefore, in the next iteration, the $j$th column of $\cl{M}$ has all zeros corresponding to rows indexed by $W$ and can not induce any more color changes. Thus, a column of the input matrix induces at most one color change. Further, by the definition of zero forcing number, \Cref{alg:colorchange} does not return an empty set unless $Z(\bar{\cl{A}})$ columns of $\cl{B}$ induces a color change. Hence, if there is a feasible solution, the number of columns $m$ of  $\cl{B}$ satisfy $m\geq Z(\bar{\cl{A}})$. Using similar arguments with $\cl{M}=[\cl{Q}(\bar{\cl{A}})\;\;\cl{B}]$, we deduce that $m\geq Z(\cl{Q}(\bar{\cl{A}})$. Hence, the minimal structural modification problem \eqref{eq:problem_colorchange} is feasible only if \eqref{eq:nece} holds.

Next, to establish the sufficient condition in \eqref{eq:suff}, it is enough to construct a matrix $\cl{B}^*\in\{0,*,?\}^{n\times Z^{\mathrm{joint}}}$ such that $W([\bar{\cl{A}} \;\; \cl{B}^*])\cup W([\cl{Q}(\bar{\cl{A}})\;\; \cl{B}^*])=\emptyset$. Let $V$ be a minimal zero forcing set of both $\bar{\cl{A}}$ and $\cl{Q}(\bar{\cl{A}})$ and $\cl{B}^*$ be a submatrix of the diagonal matrix with $*$ along the diagonal, formed by $Z^{\mathrm{joint}}$ columns indexed by $V$. Then, the first iteration of \Cref{alg:colorchange} with $[\bar{\cl{A}}\;\;\cl{B}^*]$ as input removes $V$ from $W$. So we deduce that $W([\bar{\cl{A}}\;\;\cl{B}^*])$ is the same as $W(\bar{\cl{A}})$ when initialization in Step~\ref{step:initalize} of \Cref{alg:colorchange} is changed to $W\gets\{1,2,\ldots,p\}\setminus V$. Further, from \Cref{defn:ZF}, we conclude that $W([\bar{\cl{A}}\;\;\cl{B}^*])=\emptyset$. Similarly, we can show that $W([\cl{Q}(\bar{\cl{A}})\;\;\cl{B}^*])=\emptyset$. Therefore, if $m\geq Z^{\mathrm{joint}}$, \eqref{eq:problem_colorchange} is feasible for any system $(\bar{\cl{A}},\bar{\cl{B}})$.}

\section{Proof of \Cref{thm:eq_opt}}\label{app:eq_opt}
We prove the result in two steps {under the assumption that \eqref{eq:problem_colorchange} is feasible}. We first show that, using \eqref{eq:costfnt}, the following optimization is equivalent to \eqref{eq:problem_colorchange},
\begin{equation}\label{eq:costfnt_opt}
    \underset{\cl{B}\in\{0,*,?\}^{n\times m}}{\arg\min}\; c(\cl{B}).
\end{equation}
In the second step, {we show that, if} $\cl{B}\notin \bb{B}$, then $\cl{B}$ {is not an optimal solution} of \eqref{eq:costfnt_opt}. This step establishes that \eqref{eq:costfnt_opt} is equivalent to $\underset{\cl{B}\in\bb{B}}{\arg\min}\; c(\cl{B})$, and the proof is complete.

We start with the first step. Let $w(\cl{B}) = |W([\bar{\cl{A}} \;\; \cl{B}])|+| W([\cl{Q}(\bar{\cl{A}})\;\; \cl{B}])|$, for any $\cl{B}\in\{0,*,?\}^{n \times m}$. Then, $c(\cl{B}) = \operatorname{dist}(\cl{B},\bar{\cl{B}}) + \epsilon w(\cl{B})$. Since $w(\cl{B})\geq 1$ when $w(\cl{B})\neq 0$ and $\epsilon>nm$, we have 
\begin{align}\label{eq:eq_optpf_1}
    \underset{\cl{B}: w(\cl{B})\neq 0}{\min}\; c(\cl{B}) 
    &> \underset{\cl{B}: w(\cl{B})\neq 0}{\min}\; \operatorname{dist}(\cl{B},\bar{\cl{B}}) + nm\\
   &\geq nm \geq \underset{\cl{B}: w(\cl{B})= 0}{\min}\; \operatorname{dist}(\cl{B},\bar{\cl{B}}) \label{eq:ineq}\\
   &\geq \underset{\cl{B}: w(\cl{B})= 0}{\min}\; c(\cl{B}),
\end{align}
where \eqref{eq:ineq} follows because $0\leq \operatorname{dist}(\cl{B},\bar{\cl{B}})\leq nm,$ for any $\cl{B},\bar{\cl{B}}\in\{0,*,?\}$.
Therefore, we obtain
\begin{equation}\label{eq:eq_optpf_2}
    \underset{\cl{B}}{\arg\min}\; c(\cl{B})  = \underset{\cl{B}: w(\cl{B})= 0}{\arg\min}\; c(\cl{B})= \underset{\cl{B}: w(\cl{B})= 0}{\arg\min}\;\operatorname{dist}(\cl{B},\bar{\cl{B}}).
\end{equation}
Here, $w(\cl{B})= 0$ if and only if $W([\bar{\cl{A}} \;\; \cl{B}])\cup W([\cl{Q}(\bar{\cl{A}})\;\; \cl{B}])$ is an empty set. Hence, problems \eqref{eq:problem_colorchange} and \eqref{eq:costfnt_opt} are equivalent.

Now, we present the second step using a proof by negation. Suppose there exists  $\tilde{\cl{B}}\notin\bb{B}$ such that $\tilde{\cl{B}}$ is a solution to \eqref{eq:costfnt_opt}, i.e., there exists $(i,j)$ for which $\tilde{\cl{B}}_{ij}=?$ and $\bar{\cl{B}}_{ij}\neq?$. Due to the equivalence of \eqref{eq:problem_colorchange} and \eqref{eq:costfnt_opt}, $\tilde{\cl{B}}$ also belongs to the solution set of \eqref{eq:problem_colorchange}. Therefore, $(\bar{\cl{A}},\tilde{\cl{B}})$ is strongly structurally controllable. Further, let pattern matrices $\tilde{\cl{B}}^{(0)},\tilde{\cl{B}}^{(*)}\in\{0,*,?\}^{n\times m}$ be such that they are identical to $\tilde{\cl{B}}$ except that $\tilde{\cl{B}}^{(0)}_{ij}=0$ and $\tilde{\cl{B}}^{(*)}_{ij}=*$. 
  However, from \eqref{eq:patternclass} we note that 
  \begin{equation}
      \bb{P}(\tilde{\cl{B}}) =  \bb{P}(\tilde{\cl{B}}^{(0)}) \cup \bb{P}(\tilde{\cl{B}}^{(*)}).
  \end{equation}
  Hence, we see that $(\bar{\cl{A}},\tilde{\cl{B}}^{(0)})$ and $(\bar{\cl{A}},\tilde{\cl{B}}^{(*)})$ are strongly structurally controllable. However, since either $\tilde{\cl{B}}^{(0)}_{ij}$ or $\tilde{\cl{B}}^{(*)}_{ij}$ is the same as $\bar{\cl{B}}_{ij}$ and $\tilde{\cl{B}}_{ij}\neq \bar{\cl{B}}_{ij}$, we arrive at
  \begin{equation}
      c(\tilde{\cl{B}}) = \operatorname{dist}(\tilde{\cl{B}},\bar{\cl{B}}) > \min \lc c(\tilde{\cl{B}}^{(0)}) ,c(\tilde{\cl{B}}^{(*)}) \rc.
  \end{equation}
  Thus, the assumption that $\tilde{\cl{B}}$ minimizes the cost $c$ does not hold, and the proof is complete. 
  \section{Proof of \Cref{prop:greedyopt}}\label{app:greedyopt}
Consider a system $(\bar{\cl{A}},\bar{\cl{B}})$ where $\bar{\cl{B}}=\bs{0}\in\{0,*,?\}^{n\times n}$ with $n\geq 6$ 
 and $\bar{\cl{A}}\in\{0,*,?\}^{n\times n}$ is defined as follows:
\begin{equation}
    \bar{\cl{A}}_{ij} = \begin{cases}
        ? & \text{if}\; i=j\\
        0 & \text{if}\; 1\leq i,j\leq n-3\; \text{and}\; |i-j|>1\\
        0 & \text{if}\; n-2\leq i,j\leq n\;\text{and}\; |i-j|>1\\
        * & \text{otherwise}.
    \end{cases}
\end{equation}
Here, $\cl{Q}(\bar{\cl{A}})=\bar{\cl{A}}$, and 
 $   c(\cl{B})= \operatorname{dist}(\cl{B},\bar{\cl{B}}) + 2\epsilon \lv W([\bar{\cl{A}} \;\; \cl{B}])\rv$.
Let $\cl{B}^*$ be a submatrix of the diagonal matrix with $*$ along the diagonal, formed by columns indexed by $\{1,n-2,n-1,n\}$. 
Then, $ \lv W([\bar{\cl{A}} \;\; \cl{B}^*])\rv=0$, making $\cl{B}^*$ feasible, and we get
\begin{equation}\label{eq:lowerbnd_cost}
    \min_{\cl{B}\in\{0,*,?\}^{n \times m}}\; c(\cl{B})\leq  c(\cl{B}^*)=4.
\end{equation} Further, if we apply a greedy algorithm, in the first iteration, changing any entry of $\cl{B}$ to $*$ returns the cost $1+2\epsilon(n-1)$. Let the algorithm changes $\cl{B}_{11}$ to $*$. In the next iteration, changing any entry of $\cl{B}$, except from the first row and first column to $*$ reduces the cost to $c(\cl{B}) = 2+2\epsilon(n-2)$. Assume that the algorithm changes $\cl{B}_{22}$ to $*$. Similarly, after $n-2$ iterations, the algorithm terminates with $\cl{B}_{{\rm greedy}}$ such that $[\cl{B}_{{\rm greedy}}]_{ii}=*$ for $i=1,2,\ldots,n-2$ and zeros elsewhere. Thus, the greedy solution's cost is
\begin{equation}
    c(\cl{B}_{{\rm greedy}}) = n-2 \geq  \frac{n-2}{4}\ls\min_{\cl{B}\in\{0,*,?\}^{n \times m}}c(\cl{B})\rs,
\end{equation}
where we use \eqref{eq:lowerbnd_cost}. So, for any $\delta>0$, if we choose $n>4\delta+2$, the lower bound \eqref{eq:greedy_bnd} holds.
{\section{Proof of \Cref{prop:prob_error}}\label{app:prob_error}
Let the optimal cost of the minimal structural modification problem in \eqref{eq:cost_opt} be $c^*=\underset{\cl{B}\in\bb{B}}{\min}\; c(\cl{B})$.  Then, for any $T>0$, from the DTMC distribution in \eqref{eq:desired_dist} and the union bound, the probability of the DTMC converging to an optimal solution is $B^*e^{-c^*/T}/G$. Since $c(\cl{B}')\geq c^*+1$ for all $\cl{B}'\notin \underset{\cl{B}\in\bb{B}}{\arg\min}\; c(\cl{B})$, we have
 \begin{align}
     B^*e^{-c^*/T}/G&\geq\frac{B^*e^{-c^*/T}}{B^*e^{-c^*/T}+(\lv\bb{B}\rv-B^*)e^{-(c^*+1)/T}}\\
     &=\frac{B^*}{B^*+(\lv\bb{B}\rv-B^*)e^{-1/T}}>\frac{B^*}{\lv\bb{B}\rv}.
 \end{align} 
 Therefore, we prove the first part of the result. Furthermore, the error  probability exceeds $1-\delta$ if $\frac{B^*}{B^*+(\lv\bb{B}\rv-B^*)e^{-1/T}}>1-\delta$. Rearranging this relation, we arrive at \eqref{eq:Tstopbound}.}

\bibliographystyle{IEEEtran}
\bibliography{StructuralModifications}

\begin{thebibliography}{10}
\providecommand{\url}[1]{#1}
\csname url@samestyle\endcsname
\providecommand{\newblock}{\relax}
\providecommand{\bibinfo}[2]{#2}
\providecommand{\BIBentrySTDinterwordspacing}{\spaceskip=0pt\relax}
\providecommand{\BIBentryALTinterwordstretchfactor}{4}
\providecommand{\BIBentryALTinterwordspacing}{\spaceskip=\fontdimen2\font plus
\BIBentryALTinterwordstretchfactor\fontdimen3\font minus
  \fontdimen4\font\relax}
\providecommand{\BIBforeignlanguage}[2]{{%
\expandafter\ifx\csname l@#1\endcsname\relax
\typeout{** WARNING: IEEEtran.bst: No hyphenation pattern has been}%
\typeout{** loaded for the language `#1'. Using the pattern for}%
\typeout{** the default language instead.}%
\else
\language=\csname l@#1\endcsname
\fi
#2}}
\providecommand{\BIBdecl}{\relax}
\BIBdecl

\bibitem{joseph2021controllability}
G.~Joseph, B.~Nettasinghe, V.~Krishnamurthy, and P.~K. Varshney,
  ``Controllability of network opinion in {E}rd\"os--{R}\'enyi graphs using
  sparse control inputs,'' \emph{SIAM J. Control Optim.}, vol.~59, no.~3, pp.
  2321--2345, 2021.

\bibitem{pequito2013framework}
S.~Pequito, N.~Popli, S.~Kar, M.~D. Ili{\'c}, and A.~P. Aguiar, ``A framework
  for actuator placement in large scale power systems: Minimal strong
  structural controllability,'' in \emph{Proc. IEEE Int. Workshop Comput. Adv.
  Multi-Sensor Adapt. Process.}, Dec. 2013, pp. 416--419.

\bibitem{gu2015controllability}
S.~Gu, F.~Pasqualetti, M.~Cieslak, Q.~K. Telesford, A.~B. Yu, A.~E. Kahn, J.~D.
  Medaglia, J.~M. Vettel, M.~B. Miller, S.~T. Grafton \emph{et~al.},
  ``Controllability of structural brain networks,'' \emph{Nat. Commun.},
  vol.~6, no.~1, p. 8414, Oct. 2015.

\bibitem{mayeda1979strong}
H.~Mayeda and T.~Yamada, ``Strong structural controllability,'' \emph{SIAM J.
  Control Optim.}, vol.~17, no.~1, pp. 123--138, 1979.

\bibitem{liu2017structural}
F.~Liu and A.~S. Morse, ``Structural controllability of linear systems,'' in
  \emph{Proc. IEEE Conf. Decis. Control}, Dec. 2017, pp. 3588--3593.

\bibitem{mousavi2017structural}
S.~S. Mousavi, M.~Haeri, and M.~Mesbahi, ``On the structural and strong
  structural controllability of undirected networks,'' \emph{IEEE Trans. Autom.
  Control.}, vol.~63, no.~7, pp. 2234--2241, Oct. 2017.

\bibitem{menara2018structural}
T.~Menara, D.~S. Bassett, and F.~Pasqualetti, ``Structural controllability of
  symmetric networks,'' \emph{IEEE Trans. Autom. Control.}, vol.~64, no.~9, pp.
  3740--3747, Nov. 2018.

\bibitem{reinschke1992strong}
K.~Reinschke, F.~Svaricek, and H.-D. Wend, ``On strong structural
  controllability of linear systems,'' in \emph{Proc. IEEE Conf. Decis.
  Control}, Dec. 1992, pp. 203--208.

\bibitem{jarczyk2011strong}
J.~C. Jarczyk, F.~Svaricek, and B.~Alt, ``Strong structural controllability of
  linear systems revisited,'' in \emph{Proc. IEEE Conf. Decis. Control/Eur.
  Control Conf.}, Dec. 2011, pp. 1213--1218.

\bibitem{chapman2013strong}
A.~Chapman and M.~Mesbahi, ``On strong structural controllability of networked
  systems: A constrained matching approach,'' in \emph{Proc. Amer. Control
  Conf.}, Jun. 2013, pp. 6126--6131.

\bibitem{yashashwi2019minimizing}
K.~Yashashwi, S.~Moothedath, and P.~Chaporkar, ``Minimizing inputs for strong
  structural controllability,'' in \emph{Proc. Amer. Control Conf.}, Jun. 2019,
  pp. 2048--2053.

\bibitem{abbas2023zero}
W.~Abbas, M.~Shabbir, Y.~Yaz{\i}c{\i}o{\u{g}}lu, and X.~Koutsoukos, ``On zero
  forcing sets and network controllability--computation and edge
  augmentation,'' \emph{IEEE Trans. Control Netw. Syst.}, 2023, (in press).

\bibitem{trefois2015zero}
M.~Trefois and J.-C. Delvenne, ``Zero forcing number, constrained matchings and
  strong structural controllability,'' \emph{Linear Algebra Appl.}, vol. 484,
  pp. 199--218, Nov. 2015.

\bibitem{li2020structural}
J.~Li, X.~Chen, S.~Pequito, G.~J. Pappas, and V.~M. Preciado, ``On the
  structural target controllability of undirected networks,'' \emph{IEEE Trans.
  Autom. Control.}, vol.~66, no.~10, pp. 4836--4843, Nov. 2020.

\bibitem{monshizadeh2015strong}
N.~Monshizadeh, K.~Camlibel, and H.~Trentelman, ``Strong targeted
  controllability of dynamical networks,'' in \emph{n Proc. IEEE Conf. Decis.
  Control}, Dec. 2015, pp. 4782--4787.

\bibitem{jia2020unifying}
J.~Jia, H.~J. Van~Waarde, H.~L. Trentelman, and M.~K. Camlibel, ``A unifying
  framework for strong structural controllability,'' \emph{IEEE Trans. Autom.
  Control.}, vol.~66, no.~1, pp. 391--398, Mar. 2020.

\bibitem{zhang2019minimal}
Y.~Zhang and T.~Zhou, ``Minimal structural perturbations for controllability of
  a networked system: Complexities and approximations,'' \emph{Int. J. Robust
  Nonlinear Control}, vol.~29, no.~12, pp. 4191--4208, Aug. 2019.

\bibitem{chen2018minimal}
X.~Chen, S.~Pequito, G.~J. Pappas, and V.~M. Preciado, ``Minimal edge addition
  for network controllability,'' \emph{IEEE Trans. Control Netw. Syst.},
  vol.~6, no.~1, pp. 312--323, Mar. 2019.

\bibitem{wang2023input}
L.~Wang, Z.~Li, G.~Zhao, G.~Guo, and Z.~Kong, ``Input structure design for
  structural controllability of complex networks,'' \emph{IEEE/CAA J. Autom.
  Sin.}, vol.~10, no.~7, pp. 1571--1581, Jun. 2023.

\end{thebibliography}
\end{document}